\newcommand*\ie{i.\kern.1em e.\ }
\newcommand*\eg{e.\kern.1em g.\ }
\newcommand*\cf{c.\kern.1em f.\ }
\newcommand*\almev{a.\kern.1em e.\ }
\definecolor{ama-iro}{RGB}{0, 158, 243.0}
\definecolor{fuyu-gaki}{RGB}{251, 74, 52}
\definecolor{momiji}{RGB}{245, 70, 111}
\definecolor{hotaru-bi}{RGB}{229,221,58} 
\definecolor{kon-peki}{RGB}{1,120,217}
\definecolor{shin-kai}{RGB}{77,98,152}
\definecolor{shin-ryoku}{RGB}{1,145,97}
\definecolor{yama-budo}{RGB}{171,14,122}
\definecolor{citecolor}{HTML}{5E9100}
\definecolor{theoremcolor}{HTML}{FFE1D9}
\definecolor{resultcolor}{HTML}{FFE1D9}
\definecolor{constraintcolor}{HTML}{BBD49B}
\definecolor{goalcolor}{HTML}{CAE4A7}
\definecolor{remarkcolor}{HTML}{E3EEC7}
\definecolor{definitioncolor}{HTML}{FCDFBE}
\definecolor{examplecolor}{HTML}{F9E9D9}
\definecolor{questioncolor}{HTML}{DDE8EB}
\definecolor{captioncolor}{RGB}{128,128,128}
\newcommand{\nathan}[1]{{\color{ama-iro} \textbf{NH:} #1}}
\theoremstyle{plain}
\newtheorem{theorem}{Theorem}[section]
\newtheorem{lemma}[theorem]{Lemma}
\newtheorem{proposition}[theorem]{Proposition}
\newtheorem{claim}[theorem]{Claim}
\newtheorem{corollary}[theorem]{Corollary}
\newtheorem{question}[theorem]{Question}
\newtheorem{question*}{Question}
\newtheorem{conjecture}[theorem]{Conjecture}
\newtheorem{observation}[theorem]{Observation}
\newtheorem{assumption}[theorem]{Assumption}
\crefname{claim}{Claim}{Claims}
\crefname{fact}{Fact}{Facts}
\theoremstyle{definition}
\newtheorem{definition}[theorem]{Definition}
\newtheorem{remark}[theorem]{Remark}
\newtheorem{example}[theorem]{Example}
\newtheorem{goal}[theorem]{Goal}
\newtheorem{goal*}{Goal}
\theoremstyle{plain}
\newenvironment{boxtheorem}{\begin{theorem}}{\end{theorem}}
\newenvironment{boxlemma}{\begin{lemma}}{\end{lemma}}
\newenvironment{boxcorollary}{\begin{corollary}}{\end{corollary}}
\newenvironment{boxquestion}{\begin{question}}{\end{question}}
\newenvironment{boxquestion*}{\begin{question*}}{\end{question*}}
\newenvironment{boxdefinition}{\begin{definition}}{\end{definition}}
\newenvironment{boxexample}{\begin{example}}{\end{example}}
\newtheorem{exercise}[theorem]{Exercise}
\newenvironment{boxgoal*}{\begin{goal*}}{\end{goal*}}
\theoremstyle{plain}
\newcounter{constraintcounter}
\renewcommand{\theconstraintcounter}{%
  \ifnum\value{constraintcounter}=5 IVb\else \Roman{constraintcounter}\fi
}
\newtheorem{constraint}[constraintcounter]{Constraint}
\newtheorem{constraint*}{Constraint}
\crefname{constraint}{Constraint}{Constraints}
\newenvironment{boxconstraint*}{\begin{constraint*}}{\end{constraint*}}
\newcommand{\ignore}[1]{}
\DeclareMathOperator{\poly}{poly}
\newcommand{\dist}{\mathsf{dist}}
\newcommand{\Ex}[1]{\bE \left[ #1 \right]}
\renewcommand{\Pr}[1]{\bP \left[ #1 \right]} 
\newcommand{\Pruc}[3]{\underset{ #1 }\bP \left[ #2 \;\; \mid \;\; #3 \right]}
\newcommand*{\define}{\mathrel{\vcenter{\baselineskip0.5ex \lineskiplimit0pt
                      \hbox{\scriptsize.}\hbox{\scriptsize.}}}%
                      =}
\newcommand{\zo}{\{0,1\}}
\newcommand{\cF}{\ensuremath{\mathcal{F}}}
\newcommand{\cM}{\ensuremath{\mathcal{M}}}
\newcommand{\cP}{\ensuremath{\mathcal{P}}}
\newcommand{\cQ}{\ensuremath{\mathcal{Q}}}
\newcommand{\bN}{\ensuremath{\mathbb{N}}}
\newcommand{\bP}{\ensuremath{\mathbb{P}}}
\newcommand{\BPP}{\mathsf{BPP}}
\newcommand{\BPPZ}{\mathsf{BPP}^0}
\algnewcommand\algorithmicinput{\textbf{Input: }}
\algnewcommand\Input{\item[\algorithmicinput]}
\algnewcommand\algorithmicoutput{\textbf{Output: }}
\algnewcommand\Output{\item[\algorithmicoutput]}
\algnewcommand{\OneLineIf}[2]{
  \State \algorithmicif\ #1\ \algorithmicthen\ #2}
\definecolor{CommentColor}{RGB}{102,102,102}
\newcommand{\LComment}[1]{\Statex{{\color{CommentColor}\hspace{\algorithmicindent}{$\triangleright$ \textit{#1}}}}}
\title{Better Boosting of Communication Oracles, or Not}
\author{
Nathaniel Harms \\
EPFL
\and
Artur Riazanov \\
EPFL}
\begin{document}

\maketitle

\newcommand{\Eq}{\textsc{Eq}}
\newcommand{\R}{\mathsf{R}}
\newcommand{\RPDT}{\mathsf{RPDT}}
\newcommand{\D}{\mathsf{D}}
\newcommand{\Rpriv}{\mathsf{R}^{\mathrm{priv}}}
\newcommand{\DEQ}{\mathsf{D}^{\textsc{Eq}}}
\newcommand{\HD}[2]{\textsc{HD}_{#1}^{#2}}
\newcommand{\EQ}{\textsc{Eq}}

\begin{abstract}
Suppose we have a two-party communication protocol for $f$ which allows the parties to make queries
to an oracle computing $g$; for example, they may query an \textsc{Equality} oracle. To translate
this protocol into a randomized protocol, we must replace the oracle with a randomized subroutine
for solving $g$. If $q$ queries are made, the standard technique requires that we boost the error of
each subroutine down to $O(1/q)$, leading to communication complexity which grows as $q \log q$.
For which oracles $g$ can this na\"ive boosting technique be improved?

We focus on the oracles which can be computed by constant-cost randomized protocols,
and show that the na\"ive boosting strategy can be improved for the \textsc{Equality} oracle
but not the \textsc{1-Hamming Distance} oracle. Two surprising consequences are (1)
a new example of a problem where the cost of computing $k$ independent copies grows superlinear in
$k$, drastically simplifying the only previous example due to Blais \& Brody (CCC 2019); and (2) a
new proof that \textsc{Equality} is not complete for the class of constant-cost randomized
communication (Harms, Wild, \& Zamaraev, STOC 2022; Hambardzumyan, Hatami, \& Hatami, Israel Journal
of Mathematics 2022).
\end{abstract}

\thispagestyle{empty}
\setcounter{page}{0}
\newpage
\section{Introduction}

We typically require that randomized algorithms succeed with probability $2/3$, since the
probability can be boosted to any $1-\delta$ by taking a majority vote of $O(\log(1/\delta))$
repetitions. If many randomized subroutines are used within an algorithm, the probability of error
may accumulate, and one may apply standard boosting to each subroutine to bring the error
probability down to an acceptable level. We wish to understand when this is necessary, in the
setting of communication complexity.

Suppose two parties, Alice and Bob, wish to compute a function $f(x,y)$ on their respective inputs
$x$ and $y$, using as little communication as possible, and they have access to a shared (\ie
public) source of randomness.  A convenient way to design a randomized communication protocol to
compute $f(x,y)$ is to design a \emph{deterministic} protocol, but assume that Alice and Bob have
access to an \emph{oracle} (in other words, a subroutine) which computes a certain problem $g$
that itself has an efficient randomized protocol.

\begin{boxexample}
\label{example:trees}
The \textsc{Equality} problem is the textbook example of a problem with an efficient randomized
protocol \cite{KN96,RY20}: Given inputs $a, b \in [N]$, two parties can decide (with success
probability $3/4$) whether $a = b$, using only 2 bits of (public) randomized communication,
regardless of the domain size $N$. So, to design a randomized protocol for solving another problem
$f(x,y)$, we may assume that the two parties have access to an \textsc{Equality} oracle. For
example, suppose Alice and Bob have vertices $x$ and $y$ in a shared tree $T$, and wish to decide
whether $x$ and $y$ are adjacent in $T$. If $p(x)$ denotes the parent of $x$ in $T$, then Alice and
Bob can decide adjacency using two \textsc{Equality} queries: ``$x = p(y)$?'' and ``$y = p(x)$?''
\end{boxexample}

\begin{boxexample}
The \textsc{1-Hamming Distance} communication problem is denoted $\HD{1}{}$ and defined as
$\HD{1}{n}(x,y) = 1$ if $x,y\in\{0,1\}^n$ differ on exactly 1 bit, and $0$ otherwise. It has a
constant-cost randomized protocol, but unlike adjacency in trees, this protocol \emph{cannot} be
expressed as a deterministic protocol using the \textsc{Equality} oracle \cite{HHH22dimfree,HWZ22}.
\end{boxexample}

Using oracles makes the protocol simpler, and also makes it clearer how and why
randomness is used in the protocol, which provides more insight into randomized communication (see
\eg \cite{CLV19,HHH22dimfree,HWZ22,FHHH24} for recent work using oracles to understand randomized
communication). But when we replace the \emph{oracle} for $g$ with a randomized protocol for
$g$, we must compensate for the probability that the randomized protocol produces an incorrect
answer.  Write $\D^g(f)$ for the optimal cost of a deterministic communication protocol for $f$
using an oracle for $g$ (where the players pay cost 1 to query the oracle). Write $\R_\delta(f)$ for
the optimal cost of a randomized protocol for $f$ with error $\delta$. Then the inequality
\begin{equation}
\label{eq:intro-informal-badboosting}
  \forall f\;,\qquad \R_{\delta}(f)
    = O\left( \D^g(f) \cdot \R_{1/4}(g) \cdot \log\left( \frac{\D^g(f)}{\delta} \right)
      \right) 
\end{equation}
follows from standard boosting: if there are $q = \D^g(f)$ queries made by
the protocol in the worst case, then we obtain a randomized protocol by simulating each of the $q$
queries to $g$ using a protocol for $g$ with error $\approx \delta/q$, sending $\R_{\delta/q}(g) =
O( \R_{1/4}(g) \cdot \log(q/\delta))$ bits of communication for each query. But is it possible
to improve on this na\"ive bound? The main question of this paper is:

\begin{boxquestion}
\label{question:intro-main-question}
For which oracle functions $g$ can \cref{eq:intro-informal-badboosting} be improved?
\end{boxquestion}

We focus on the oracles $g$ which have \emph{constant-cost} randomized communication protocols, like
\textsc{Equality}. Randomized communication is quite poorly understood, with many fundamental
questions remaining open even when restricted to the surprisingly rich class of constant-cost
problems. Many recent works have focused on understanding these extreme examples of efficient
randomized computation; see \cite{HHH22dimfree,HWZ22,HHH22counter,EHK22,HHPTZ22,HZ24,FHHH24} and the
survey \cite{HH24}. And indeed some of these works \cite{HHH22dimfree,HH24} use
\cref{eq:intro-informal-badboosting} specifically for the \textsc{Equality} oracle. So this is a
good place to begin studying \cref{question:intro-main-question}.  Our main result is:

\begin{boxtheorem}[Informal; see \cref{thm:intro-eq-betterboosting,thm:no-boosting-for-hd1}]
\label{thm:intro-main-informal}
\cref{eq:intro-informal-badboosting} can be improved for the \textsc{Equality} oracle,
but it is tight for the \textsc{1-Hamming Distance} oracle.
\end{boxtheorem}

This has some unexpected consequences, described below, and also answers
\cref{question:intro-main-question} for all \emph{known} constant-cost problems.  

Every known constant-cost problem $g$ satisfies either $\D^\EQ(g) = O(1)$ or
$\D^g(\HD{1}{}) = O(1)$ (\cite{FGHH24} gives a survey of all known problems). Therefore we answer \cref{question:intro-main-question} for all \emph{known} constant-cost oracles.
Towards an answer for \emph{all} constant-cost oracles, we show that the technique which allows us
to improve \cref{eq:intro-informal-badboosting} works \emph{only} for the \textsc{Equality} oracle
(\cref{prop:no-generalization}).

\bigskip 
Our main proof also has two other surprising consequences:
\medskip

\noindent\textbf{Direct sums.} Direct sum questions ask how the complexity of computing $k$ copies
of a problem grows with $k$ (see \eg~\cite{FKNN95,KRW95,BBCR13,BB19}). Recently, \cite{BB19}
answered a long-standing question of \cite{FKNN95} by providing the first example of a problem where
the communication complexity of computing $k$ independent copies grows \emph{superlinearly} with
$k$. Their example is specially designed to exhibit this behaviour and goes through the
query-to-communication lifting technique. In our investigation of
\cref{question:intro-main-question}, we show that computing $k$ independent copies of the
drastically simpler, constant-cost \textsc{1-Hamming Distance} problem requires $\Omega\left(k
\log k\right)$ bits of communication (\cref{thm:no-boosting-for-hd1}). As a corollary, we also show a
similar direct sum theorem for randomized parity decision trees (\cref{cor:intro-rpdt}).\\

\noindent\textbf{Oracle separations.} In an effort to better understand the power of randomness in
communication, recent works have studied the relative power of different oracles.  \cite{CLV19} show
that the \textsc{Equality} oracle is not powerful enough to simulate the standard communication
complexity class $\BPP$ (\ie $N \times N$ communication matrices with cost $\poly\log\log N$), \ie
\textsc{Equality} is not \emph{complete} for $\BPP$.  \cite{HHH22dimfree,HWZ22} showed that
\textsc{Equality} is also not complete for the class $\BPPZ$ of \emph{constant-cost} communication
problems, because \textsc{1-Hamming Distance} does not reduce to it; and \cite{FHHH24} show that
there is \emph{no} complete problem for $\BPPZ$. There are many lower-bound techniques for
communication complexity, but not many lower bounds for communication with \emph{oracles}. Our
investigation of \cref{question:intro-main-question} gives an unexpected new proof of the separation
between the \textsc{Equality} and \textsc{1-Hamming Distance} oracles; our proof is
``algorithmic''\!\!, and arguably simpler than the Ramsey-theoretic proof of \cite{HWZ22} or the
Fourier-analytic proof of \cite{HHH22dimfree}.

\subsection*{Further Motivation, Discussion \& Open Problems}

Let's say a constant-cost oracle function $g$ has \emph{better boosting} if
\[
  \forall f \;:\qquad \R_\delta(f) = O( \D^g(f) + \log(1/\delta)) \,.
\]
We showed that among the \emph{currently-known} constant-cost oracle functions $g$, better boosting
is possible if and only if $\DEQ(g) = O(1)$, and we observed that among \emph{all} constant-cost
oracles, only the \textsc{Equality} oracle satisfies the properties used to prove
\cref{thm:intro-eq-betterboosting}.  So, permit us the following conjecture:

\begin{conjecture}
\label{conj:better-boosting-bppz}
An oracle function $g \in \BPPZ$ has better boosting if and only if $\DEQ(g) = O(1)$.
\end{conjecture}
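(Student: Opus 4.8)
The plan is to prove the two directions of the biconditional separately, and to first dispose of a trivial case. Only $g\in\BPPZ$ is interesting: if $g\notin\BPPZ$, then taking $f=g$ gives $\D^g(f)\le 1$ while $\R_{1/3}(g)=\omega(1)$, so $g$ fails better boosting; and $\DEQ(g)=\omega(1)$ automatically, since $\EQ\in\BPPZ$ and a constant-cost $\EQ$-reduction of $g$ would place $g$ in $\BPPZ$. So the biconditional holds trivially outside $\BPPZ$, and we may assume $g\in\BPPZ$. For the direction $\DEQ(g)=O(1)\Rightarrow$ better boosting: a constant-cost deterministic $\EQ$-oracle protocol for $g$ lets us replace each $g$-query in a protocol for $f$ by $O(1)$ queries to $\EQ$ and $O(1)$ extra bits, so $\D^{\EQ}(f)=O(\D^g(f))$ for every $f$; composing with the Equality better-boosting theorem (\cref{thm:intro-eq-betterboosting}) gives $\R_\delta(f)=O(\D^{\EQ}(f)+\log(1/\delta))=O(\D^g(f)+\log(1/\delta))$. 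So this direction is a formal consequence of \cref{thm:intro-eq-betterboosting}.

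The substantive direction is the converse: if $g\in\BPPZ$ and $\DEQ(g)=\omega(1)$, then $g$ does not have better boosting. The natural witness is the direct sum $f=g^{\otimes k}$ (compute all $k$ answers of $k$ independent copies of $g$, each on a domain large enough that $g$ is hard on it): since $\D^g(g^{\otimes k})\le k$, better boosting would force $\R_{1/3}(g^{\otimes k})=O(k)$, so it suffices to prove the direct sum lower bound
\[
  \DEQ(g)=\omega(1)\quad\Longrightarrow\quad \R_{1/3}(g^{\otimes k})=\omega(k).
\]
This is exactly the content of \cref{thm:no-boosting-for-hd1} in the special case $g=\HD{1}{}$ (there in the sharp form $\Omega(k\log k)$), and the conjecture amounts to extending that lower bound to every constant-cost oracle not reducible to $\EQ$.

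I see two natural lines of attack on this direct sum bound. One is structural: invoke a characterization of constant $\EQ$-cost via ``blocky'' decompositions of the communication matrix (in the spirit of \cite{HHH22dimfree,HWZ22}); if $g$ requires unboundedly many blocky pieces, one would try to show that any cheap randomized protocol for $g^{\otimes k}$ still cannot beat the naive per-copy error budget $\Theta(1/k)$, as happens for $\HD{1}{}$. The other is to lift, to the direct-sum level, the ``algorithmic'' argument used in this paper for the $\EQ$ versus $\HD{1}{}$ separation, reasoning directly about how much a low-cost protocol can reveal about each coordinate. A cleaner intermediate milestone would be to prove that every $g\in\BPPZ$ with $\DEQ(g)=\omega(1)$ admits a constant-cost reduction \emph{from} $\HD{1}{}$, which would immediately yield the conjecture via \cref{thm:no-boosting-for-hd1}. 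The main obstacle is that this milestone may well be false: $\BPPZ$ has no complete problem (\cite{FHHH24}) and plausibly contains an infinite antichain of problems strictly above $\EQ$, so one likely needs a genuinely uniform superlinear direct-sum lower bound covering all of them — and superlinear direct sums are precisely the phenomenon that was hard to exhibit even once (\cite{BB19}). A further risk is that $g^{\otimes k}$ is not always the right witness, so that refuting better boosting for some $g$ requires first constructing a bespoke hard family $f$ from the structure of $g$, adding one more place where the argument could get stuck.
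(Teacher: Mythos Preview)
This statement is a \emph{conjecture} in the paper, not a theorem: the paper offers no proof, and in fact the authors explicitly say they ``hope the conjecture is false''. So there is no paper proof to compare your proposal against, and your proposal is not a proof either --- nor do you claim it is.

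That said, your analysis of the situation is accurate and well aligned with the paper. The easy direction ($\DEQ(g)=O(1)\Rightarrow$ better boosting) is indeed an immediate corollary of \cref{thm:intro-eq-betterboosting} via the transitivity argument you give; the paper uses exactly this reasoning to assert that better boosting holds for all \emph{known} constant-cost oracles with $\DEQ(g)=O(1)$. (Your preliminary discussion of $g\notin\BPPZ$ is unnecessary, since the conjecture is already restricted to $g\in\BPPZ$.)

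For the hard direction, you correctly identify the natural witness $f=g^{\otimes k}$ and reduce the question to a superlinear direct-sum lower bound for every $g\in\BPPZ$ with $\DEQ(g)=\omega(1)$. Your ``intermediate milestone'' --- that every such $g$ admits a constant-cost reduction from $\HD{1}{}$ --- is precisely what the paper observes holds for all \emph{currently known} examples, and your caveat that it ``may well be false'' matches the paper's own stance: the authors note that a counterexample to the conjecture would require ``a new example of a constant-cost (total) communication problem that is not somehow a generalization of \textsc{$1$-Hamming Distance}'', and they would welcome such an example. So the obstacle you name is exactly the one the paper leaves open. Your proposal is a reasonable research outline, but it is not a proof, and none is expected.
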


To disprove this conjecture, we need a new example of a constant-cost (total) communication
problem that is not somehow a generalization of \textsc{$1$-Hamming Distance}. Such an example would
be very interesting, so in that regard we hope the conjecture is false.

\newcommand{\UPPZ}{\mathsf{UPP}^0}
One more motivation of the current study is to find an approach towards a question of \cite{HZ24}
about the intersection between communication complexity classes $\UPPZ \cap \BPPZ$, where $\UPPZ$
denotes the class of problems with bounded sign-rank, or equivalently, constant-cost
\emph{unbounded-error} randomized protocols \cite{PS86}. Writing $\mathsf{EQ}^0$ for the class of
problems $g$ where $\DEQ(g) = O(1)$, \cite{HZ24} asks:

\begin{question}[\cite{HZ24}]
\label{question:discussion-hz}
Is $\UPPZ \cap \BPPZ = \mathsf{EQ}^0$?
\end{question}

This question seems challenging;
as noted in \cite{HZ24}, a positive answer would imply other conjectures about $\UPPZ \cap \BPPZ$,
notably the conjecture of \cite{HHPTZ22} that \textsc{$1$-Hamming Distance} does not belong to
$\UPPZ$, which would be the first example of a problem in $\BPPZ \setminus \UPPZ$. \cite{HHPTZ22}
showed that all known lower-bound techniques against $\UPPZ$ fail to prove this.  But a positive
answer to \cref{question:discussion-hz} implies that all oracles in $\UPPZ \cap \BPPZ$ have better
boosting, so a weaker question is:

\begin{question}
\label{question:discussion-upp0}
Do all oracles in $\UPPZ \cap \BPPZ$ have better boosting?
\end{question}

Because of \cref{thm:intro-main-informal}, this weaker question would also suffice to prove that
\textsc{$1$-Hamming Distance} does not belong to $\UPPZ$.  It is not clear to us whether
\cref{question:discussion-upp0} is easier to answer than \cref{question:discussion-hz}. If the
answer to \cref{question:discussion-upp0} is negative (\ie there is an oracle in $\UPPZ \cap \BPPZ$
which does not have better boosting), then either \cref{conj:better-boosting-bppz} or
\cref{question:discussion-hz} is false.


\ignore{
Our results are in three categories: (I) we answer \cref{question:intro-main-question} for the known
\emph{constant-cost} oracles; (II) we use these techniques to get new results about direct sums for
randomized communication and parity decision trees; and (III) we use these techniques to get new
proofs of known results relating the power of different oracles.

\subsubsection*{I. Better Boosting for Constant-Cost Oracles}
\nathan{TODO: figure out what to do with this part (note: definition of RPDT should be moved
to where it is used below}

Although the question is interesting in general, we focus on the oracle functions $g$ which have
\emph{constant-cost} randomized protocols (henceforth called simply \emph{constant-cost} protocols),
like \textsc{Equality}. The class of constant-cost communication problems is denoted $\BPPZ$, and
many recent works have focused on this surprisingly rich class with many connections to other areas
\cite{HHH22dimfree,HWZ22,HHH22counter,EHK22,HHPTZ22,HZ24,HH24,FHHH24}. Focusing on constant-cost
oracles is also natural for studying \cref{question:intro-main-question} because it simplifies
the question in two ways. First, as in
the previous work using $\BPPZ$ oracles (\eg \cite{CLV19,HWZ22,HHH22dimfree,FHHH24}), we
may allow oracle queries of arbitrary size, which is often inappropriate otherwise -- \eg allowing
arbitrarily large queries to a \textsc{Disjointness} oracle lets any problem be solved with just one
query. Second, the right-hand side of \cref{eq:intro-informal-badboosting} collapses to a simpler
form. For example, the following special case of \cref{eq:intro-informal-badboosting} appears
several times in recent work (\eg \cite{HHH22dimfree, HH24}):
\begin{equation}
\label{eq:intro-informal-equality}
  \R_{1/4}(f) = O\left( \DEQ(f) \log \DEQ(f) \right) \,.
\end{equation}
We first show that this can be improved:}

\ignore{
This proof uses a folklore type of argument dating back to \cite{FPRU94, Nis93}; our main
observation is simply that the argument can be generalized to arbitrary \textsc{Equality} queries.
\cref{thm:intro-eq-betterboosting} also generalizes a result of \cite{FKNN95}, that $k$ independent
copies of \textsc{Equality} can be computed with randomized communication cost $O(k)$, which used a
different argument that does not generalize.

To answer \cref{question:intro-main-question} for constant-cost oracles, we want to know if there
are other oracles where \cref{thm:intro-eq-betterboosting} could hold. We observe that the proof of
\cref{thm:intro-eq-betterboosting} holds \emph{only} for \textsc{Equality}:

\begin{proposition}[Informal; see \cref{prop:no-generalization}]
Among constant-cost oracles, the protocol witnessing \cref{thm:intro-eq-betterboosting}
works \emph{only} for the \textsc{Equality} oracle.
\end{proposition}

Of course, other protocols may exist, so to continue our investigation we need more examples of
constant-cost problems.  Note that \cref{thm:intro-eq-betterboosting} also holds for any function
$g$ with $\DEQ(g) = O(1)$ (like the adjacency-in-trees function, \cref{example:trees}), since in that case
\begin{equation}
\label{eq:intro-transitivity}
  \R_\delta(f) = O\left( \DEQ(f) + \log \frac{1}{\delta} \right)
               \leq O\left( \D^g(f) \cdot \DEQ(g) + \log\frac{1}{\delta} \right)
               = O\left( \D^g(f) + \log\frac{1}{\delta} \right) \,.
\end{equation}
If $\D^g(f) = O(1)$, we say that $f$ \emph{reduces to} $g$.  It is not immediately obvious that
there is any problem $g \in \BPPZ$ that does not reduce to \textsc{Equality}, but this was proved
recently in two concurrent works \cite{HWZ22,HHH22dimfree}.  In the \textsc{$k$-Hamming Distance}
problem, denoted $\HD{k}{n}$, Alice and Bob are given $x,y \in \zo^n$ and must decide whether
$\dist(x,y) \leq k$, where $\dist(x,y)$ denotes the Hamming distance.  This problem
has randomized cost $\Theta(k \log k)$ \cite{HSZZ06, BBG14, Sag18}, so it is in $\BPPZ$ when $k$
is constant, and the \textsc{$k$-Hamming Distance} problems form an infinite hierarchy in $\BPPZ$
under oracle reductions \cite{FHHH24}. Every \emph{known} constant-cost problem $g$ satisfies the
following (see~\cite{FHHH24}):
\begin{center}
\emph{Either $\DEQ(g) = O(1)$ or $\D^g(\textsc{HD}_1) = O(1)$.}
\end{center}
Therefore, if \cref{thm:intro-eq-betterboosting} holds for any currently-known problem $g \in
\BPPZ$, then it also holds for \textsc{$1$-Hamming Distance}, by a similar argument as in
\cref{eq:intro-transitivity}. But we show in fact that the na\"ive simulation of \textsc{$1$-Hamming
Distance} oracles is optimal, so \cref{thm:intro-eq-betterboosting} cannot hold for any known
oracles in $\BPPZ$ except those that reduce to \textsc{Equality}. This answers 
\cref{question:intro-main-question} for the known oracles in $\BPPZ$.
}

\ignore{
\subsubsection*{II. Direct Sums}

Our proof of \cref{thm:intro-hd-badboosting} also has consequences for direct sum
questions. These questions ask how fast the complexity of computing $k$ independent copies of a
function $f$ grows with $k$ and have been widely studied (see \eg \cite{FKNN95,KRW95,BBCR13,BB19}). For a communication
problem $f$ and number $k$, $f^{\otimes k}$ is the problem $f^{\otimes k}(x,y) = (f(x_1,y_1),
f(x_2,y_2), \dotsc, f(x_k,y_k))$ where $x = (x_1,\dotsc, x_k)$ and $y = (y_1, \dotsc, y_k)$ are
sequences of $k$ inputs to $f$.  A long-standing question of \cite{FKNN95} was whether any function
has $\R(f^{\otimes k}) = \omega(k \cdot \R(f))$.  This was recently answered by \cite{BB19} who
constructed a total function $f$ such that
\[
  \R_{1/4}(f^{\otimes k}) = \Omega( \R_{1/4}(f) \cdot k \log k ) \,.
\]
This was a corollary of their main result on query complexity, and the function was constructed by
finding a function with appropriate decision-tree query complexity and applying the
query-to-communication lifting technique of \cite{GPW20}. We show that the much simpler
\textsc{$1$-Hamming Distance} function satisfies the same bound\footnote{ The
function of \cite{BB19} had weaker restrictions on $n$: it allowed any $k \leq 2^{n^c}$ for some $c
> 0$. For $\HD{1}{}$, \cref{thm:intro-eq-betterboosting} shows that such a strong statement is false, since $\R_{1/4}((\HD{1}{n})^{\oplus k}) \le O(\DEQ((\HD{1}{n})^{\oplus k})) \le O(k \log n) = o(k \log k)$ for $k = n^{\omega(1)}$.} (recall that $\R_{1/4}(\HD{1}{n}) = O(1)$):

\begin{boxtheorem}
\label{thm:intro-bb}
For $n \geq 4k^2$, $\R_{1/4}((\HD{1}{n})^{\otimes k}) = \Omega( k \log k / \log\log k)$.
\end{boxtheorem}

This also implies an analogous result for randomized parity decision trees. Write $\mathsf{RPDT}(f)$
for the randomized parity decision tree complexity of $f\colon \zo^n \to \zo$. We get that the
\textsc{$1$-Hamming Weight} function $\textsc{HW}_1^n\colon \zo^n \to \zo$ (which outputs 1 if the
input $x$ has weight at most 1) is an example of a function where $\mathsf{RPDT}(f^{\otimes k})$
grows faster than $k \cdot \mathsf{RPDT}(f)$. To the best of our knowledge, this is the first such
explicit example, although a similar result might follow from \cite{BB19} (using different lifting
techniques).

\begin{boxcorollary}
\label{cor:intro-rpdt}
For $n \geq 4k^2$, 
$\mathsf{RPDT}((\textsc{HW}_1^n)^{\otimes k}) = \Omega( k \log k / \log\log k)$.
\end{boxcorollary}

The main tool in our proofs is a new randomized reduction from $\HD{k}{}$ to $(\HD{1}{})^{\otimes
O(k)}$, together with the breakthrough $\Omega(k \log k)$ lower bound for the randomized
communication complexity of \textsc{$k$-Hamming Distance} \cite{Sag18}.
}

Similar questions about probability boosting were studied recently for query complexity in
\cite{BGKW20} who focused on the properties of the \emph{outer} function $f$ of which allow for
better boosting to compute $f \circ g^{\otimes k}$ , whereas one may think of our oracles as the
\emph{inner} functions.  We may rephrase \cref{thm:intro-eq-betterboosting} as a ``composition
theorem'' which says that for any function $f\colon \zo^k \to \zo$, the composed function $f \circ
(\textsc{Eq})^{\otimes k}$ which applies $f$ to the result of $k$ instances of \textsc{Equality} has
communication cost
\begin{equation}
\label{eq:intro-composition}
  \R_\delta(f \circ (\textsc{Eq})^{\otimes k}) = O(\mathsf{DT}(f) + \log(1/\delta))
\end{equation}
where $\mathsf{DT}(f)$ is the decision-tree depth of $f$. We prefer the statement in
\cref{thm:intro-eq-betterboosting} because it more clearly differentiates between the
\emph{protocol} and the \emph{problem}. To see what we mean, consider taking $f$ to be the
\textsc{And} function; the immediate consequence of \cref{eq:intro-composition} is that
$\R_{1/4}(\textsc{And} \circ (\textsc{Eq})^{\otimes k}) = O(k)$, whereas the immediate consequence
of \cref{thm:intro-eq-betterboosting} is that $\R_{1/4}(\textsc{And} \circ (\textsc{Eq})^{\otimes
k}) = O(1)$ because this function can be computed using 1 \textsc{Equality} query. To get the same
result from \cref{eq:intro-composition} one must rewrite the \emph{problem} $\textsc{And} \circ
(\textsc{Eq})^{\otimes k}$ as a different decision tree over different inputs.

\section{Definitions: Communication Problems and Oracles}
\label{section:preliminaries}

We will use some non-standard definitions that are more natural for constant-cost problems. These
definitions come from \eg \cite{CLV19,HWZ22,HHH22dimfree,HZ24,FHHH24}.

It is convenient to define a \emph{communication problem} as a set $\cP$ of Boolean matrices,
closed under row and column permutations. The more standard definition has one fixed function $f\colon
\zo^n \times \zo^n \to \zo$ for each input size $n$, with communication matrix $M_f \in \zo^{2^n
\times 2^n}$, whereas we will think of a communication problem $\cP$ as possibly containing many
different communication matrices $M \in \zo^{N \times N}$ on each domain size $N$. (In the
adjacency-in-trees problem, \cref{example:trees}, there are many different trees on $N$ vertices,
which define many different communication matrices.)

For a fixed matrix $M \in \zo^{N \times N}$ and parameter $\delta < 1/2$, we write $\R_{\delta}(M)$
for the two-way, public-coin randomized communication complexity of $M$. For a communication problem
$\cP$, we write $\R_\delta(\cP)$ as the function
\[
  N \mapsto \max \left\{ \R_\delta(M) \;:\; M \in \cP, M \in \zo^{N \times N} \right\} \,.
\]
Then the class $\BPPZ$ is the collection of communication problems $\cP$ which satisfy
$\R_{1/4}(\cP) = O(1)$.

To define communication with oracles, we require the notion of a \emph{query set}:
\newcommand{\QS}{\mathsf{QS}}
\begin{boxdefinition}[Query Set]
A \emph{query set} $\cQ$ is a set of matrices closed under (1) taking submatrices; (2) permuting
rows and columns; and (3) copying rows and columns.  For any set of matrices $\cM$, we write
$\QS(\cM)$ for the closure of $\cM$ under these operations.
\end{boxdefinition}

Observe that if $\R_{1/4}(\cP) = O(1)$ then $\R_{1/4}(\QS(\cP)) = O(1)$, since constant-cost
protocols are preserved by row and column copying as well as taking submatrices.

\begin{boxdefinition}[Communication with oracles]
\label{def:oracle-communication}
Let $\cP$ be any communication problem, \ie set of Boolean matrices. For any $N \times N$ matrix $M$
with values in a set $\Lambda$, write $\D^\cP(M)$ for the minimum cost of a two-way deterministic
protocol computing $M$ as follows.  The protocol is a binary tree $T$ where each leaf node $v$ is
assigned a value $\ell(v) \in \Lambda$, and each inner node $v$ is assigned a query matrix $Q \in \zo^{N
\times N}$ where $Q \in \QS(\cP)$. On any pair of inputs $(i,j) \in [N] \times [N]$, the protocol
proceeds as follows: the current pointer $c$ is initiated as the root of $T$, and at every step, if
$Q_c(i,j) = 1$ then the pointer $c$ moves to its left child, and otherwise if $Q_c(i,j) = 0$ then
the pointer $c$ moves to the right.  Once the pointer $c$ reaches a leaf, the output of the protocol
is the value $\ell(c)$ assigned to the leaf $c$. It is required that $\ell(c) = M(i,j)$. The cost of the protocol is the depth of $T$.
\end{boxdefinition}

This definition differs from the standard definition of oracle communication because we do not
restrict the input size of the oracle. Specifically, each oracle query is represented by an $N
\times N$ matrix $Q \in \QS(\cP)$, obtained by taking a submatrix of an \emph{arbitrarily large}
instance of $P \in \cP$ and then copying rows and columns. This is the natural definition because
this preserves constant-cost randomized protocols, whereas preserving non-constant cost functions
usually requires restricting the size of the instance $P \in \cP$.

\begin{remark}
For constant-cost communication problems, \ie problems $\cP \in \BPPZ$, we will simply identify the
problem $\cP$ with its query set $\QS(\cP)$ since this does not change the communication complexity
of $\cP$.  For example, $\DEQ(\cdot)$ is $\D^\cQ(\cdot)$ where $\cQ$ is taken to be the closure
$\QS(\{I_{N,N}\})$ of the identity matrices.
\end{remark}


\section{Better Boosting of \textsc{Equality} Protocols}

We prove the first part of \cref{thm:intro-main-informal}, that \cref{eq:intro-informal-badboosting}
can be improved for the \textsc{Equality} oracle. This theorem will also be applied in the later
sections of the paper. 

The proof uses the ``noisy-search-tree'' argument of \cite{FPRU94}. This is a well-known idea that
was previously applied in \cite{Nis93} to get an upper bound on the communication complexity of
\textsc{Greater-Than}; see also the textbook exercise in \cite{RY20}. We only require the
observation that the argument works for arbitrary \textsc{Equality} queries, not just the binary
search queries used in those papers. Also, we did not find any complete exposition of the proof of
the \textsc{Greater-Than} upper-bound: the application of \cite{FPRU94} in \cite{Nis93} is black-box
and informal, and the models of computation in these two works do not match up, which causes some
very minor gaps in the proof\footnote{The gap is that the outputs of the \textsc{Equality}
subroutine are \emph{not} independent random variables. As far as we can tell, this very minor issue
persists in the textbook exercise in \cite{RY20} devoted to the \textsc{Greater-Than} problem.}, so
we make an effort to give a complete exposition here.

\textbf{Informal protocol sketch:}
The idea of the protocol is that an \textsc{Equality}-oracle protocol is a binary tree $T$, where
each node is a query to the oracle. On any given input, there is one ``correct'' path through $T$.
The randomized protocol keeps track of a current node $c$ in the tree $T$. In each round, the node
$c$ either moves down to one of its children, or, if it detects that a mistake has been made in an
earlier round, it moves back up the tree. There are two main ideas:
\begin{enumerate}
\item At every node $c$, the protocol can ``double-check'' the answers in \emph{all} ancestor nodes
with only $O(1)$ communication overhead, which \emph{implicitly} reduces the error of all previous
queries. This uses a special property of the \textsc{Equality} oracle, that a conjunction of
equalities $(a_1 = b_1) \wedge (a_2 = b_2) \wedge \dotsm \wedge (a_t = b_t)$ is equivalent to a
single equality $(a_1,a_2,\dotsc ,a_t) = (b_1, b_2, \dotsc, b_t)$. We can use this property to check
if the current node $c$ is on the ``correct'' path. (This simple observation is our contribution to
this argument.)
\item The random walk of the node $c$ through the tree is likely to stay close to the ``correct''
path; this is essentially the argument of \cite{FPRU94}.
\end{enumerate}

\begin{boxtheorem}
\label{thm:eq-betterboosting}
\label{thm:intro-eq-betterboosting}
For any $M \in \Lambda^{N \times N}$ with values in an arbitrary set $\Lambda$,
\[
  \normalfont\R_\delta(M) = O\left(\DEQ(M) +  \log\frac{1}{\delta} \right) \,.
\]
\end{boxtheorem}
\begin{proof}
Let $T$ be the tree of depth $d = \DEQ(M)$ as in
\cref{def:oracle-communication}. For a node $v$ in $T$ let $a_v,b_v\colon [N] \to \bN$ be the functions defining the oracle query at the node $v$ with $Q_v(i,j) = \EQ(a_v(i), b_v(j))$. Let $R \define 4\cdot\max\{d,
C\log(1/\delta)\}$ where $C$ is a sufficiently large constant, and construct a tree $T'$ by
replacing each leaf node $v$ of $T$ with another tree $L_v$ of depth $C \log(1/\delta)$ (where $C$ is a
sufficiently large constant), with each node $v'$ of $L_v$ being a copy of the parent node of
$v$ in $T$ (\ie the functions $a_{v'}, b_{v'}\colon [N] \to \bN$ are identical to those of the parent of
$v$). We then simulate the protocol defined by $T$ using \cref{alg:noisy-tree}.

\begin{algorithm}
\begin{algorithmic}[1]
\Input{Row $i$, column $j$ of communication matrix $M$.}
\State Initialize pointer $c \gets \mathrm{root}(T')$.
\For{$r \in [R]$}
  \State Let $P = (p_1, p_2, \dotsc, p_k)$ be the path in $T'$ from $\mathrm{root}(T')$ to $c$.
  \State Let $(q_1, q_2, \dotsc, q_t)$ be the subsequence of $P$ where the protocol has taken
    the left branch.
  \LComment{(\ie the nodes where the protocol previously detected ``equality''.)}
  \State Use the \textsc{Equality} protocol with error probability $1/4$ to check
    \[
      (a_{q_1}(i), a_{q_2}(i), \dotsc,  a_{q_t}(i))
      = (b_{q_1}(j), b_{q_2}(j), \dotsc, b_{q_t}(j)) ?
      \vspace{-1.3em}
    \]
    \label{ntcheckabove}
    \LComment{Re-check all previous ``equality'' answers simultaneously.}
    \If{inequality is detected on the sequence $q_1, \dotsc, q_t$}
      \LComment{A mistake was detected in an earlier round; go back up.}
      \State Update $c$ to be the parent of $c$ in $T'$.
    \Else
      \LComment{Check the current node and continue.}
      \State Use the \textsc{Equality} protocol with error probability $1/4$ to check
        $a_c(i) = b_c(j)$?
      \label{ntcheckcurrent}
      \OneLineIf{Equality is detected}{move $c$ to its left child, otherwise move $c$ to its right
child.}
    \EndIf
\EndFor
\If{$c$ belongs to a subtree $L_v$ (replacing leaf $v$ of $T$)}
\State{\textbf{return} $\ell(v)$. Otherwise \textbf{return} 0.}
\EndIf
\end{algorithmic}
\caption{Noisy-Tree Protocol}
\label{alg:noisy-tree}
\end{algorithm}
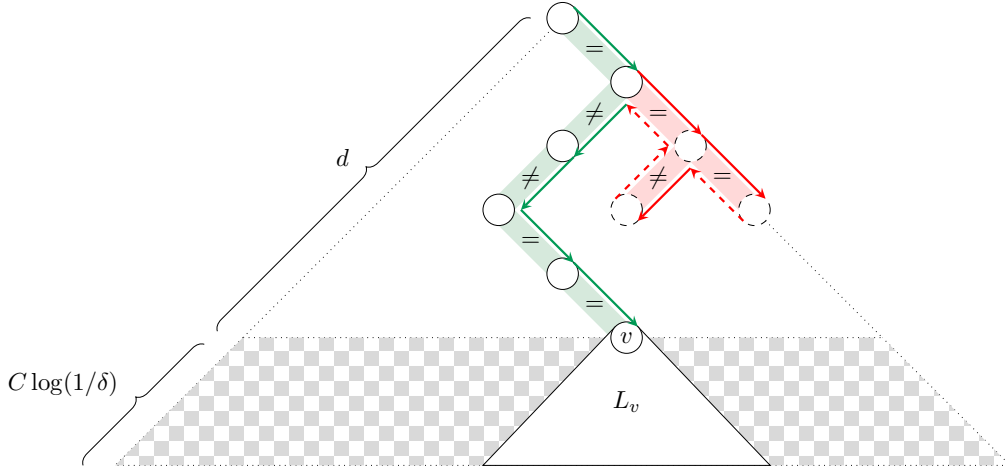
\begin{figure}[ht]
    \centering
    \scalebox{0.85}{
    \begin{tikzpicture}
        \draw[dotted] (0, 0) -- (-2,-2) -- (12, -2) -- (10, 0);
        \draw[dotted] (0, 0) -- (5, 5) -- (10, 0) -- cycle;
        \fill[pattern = {checkerboard}, pattern color = black!15] (0, 0) -- (-2,-2) -- (12, -2) -- (10, 0) -- cycle;

        \draw[decorate,decoration={brace,amplitude=5pt}] (-0.4, 0.1) --node[above left] {$d$~~~} (4.5, 5);
        \draw[decorate,decoration={brace,amplitude=5pt}] (-2.5, -2) --node[above left] {$C\log (1/\delta)$~~~} (-0.6, -0.1);
        \coordinate (root) at (5, 5) {};
        \coordinate (v1) at (6, 4) {};
        \coordinate (v1err1) at (7, 3) {};
        \coordinate (v1err2) at (8, 2) {};
        \coordinate (v1err3) at (6, 2) {};
        \coordinate (v2) at (5, 3) {};
        \coordinate (v3) at (4, 2) {};
        \coordinate (v4) at (5, 1) {};
        \coordinate (v5) at (6, 0) {};
        \draw[line width = 10pt, color = ForestGreen!15, line cap=round] (root) -- node[draw=none, color=black] {$=$} (v1);
        \draw[line width = 10pt, color = ForestGreen!15, line cap=round] (v1) -- node[draw=none, color=black] {$\neq$} (v2) --node[draw=none, color=black] {$\neq$}  (v3);
        \draw[line width = 10pt, color = red!15, line cap=round] (v1) -- node[draw=none, color=black] {$=$} (v1err1) --node[draw=none, color=black] {$=$}  (v1err2);
        \draw[line width = 10pt, color = red!15, line cap=round] (v1err1) -- node[draw=none, color=black] {$\neq$} (v1err3);
        \draw[line width = 10pt, color = ForestGreen!15, line cap=round] (v3) --node[draw=none, color=black] {$=$}  (v4) --node[draw=none, color=black] {$=$}  (v5);
        
        \draw[fill=white]  (root) circle (7pt);
        \draw[fill=white]  (v1) circle (7pt);
        \draw[fill=white]  (v2) circle (7pt);
        \draw[fill=white, dashed]  (v1err1) circle (7pt);
        \draw[fill=white, dashed]  (v1err2) circle (7pt);
        \draw[fill=white, dashed]  (v1err3) circle (7pt);
        
        \draw[fill=white]  (v3) circle (7pt);
        \draw[fill=white]  (v4) circle (7pt);
        \coordinate (triangleleft) at (4,-2);
        \coordinate (triangleright) at (8,-2);
        
        \fill[fill = white, draw=black] ([xshift={-7pt/sqrt(2)}, yshift={7pt/sqrt(2)}]v5) -- ([xshift={-7pt}]triangleleft) -- ([xshift=7pt]triangleright)
        -- ([xshift={7pt/sqrt(2)}, yshift={7pt/sqrt(2)}]v5) -- cycle;
        \node[fill=none, draw=none] (t) at (6,-1) {$L_v$};
        \draw[fill=white]  (v5) circle (7pt);
        \node[draw=none, fill=none] (some) at (v5) {$v$};
        \draw[line width=1pt, color=ForestGreen, ->, >=stealth] ([xshift={7/sqrt(2)}, yshift={7/sqrt(2)}] root) -> ([xshift={7/sqrt(2)}, yshift={7/sqrt(2)}]v1);
        \draw[line width=1pt, color=red, ->, >=stealth] ([xshift={7/sqrt(2)}, yshift={7/sqrt(2)}] v1) -> ([xshift={7/sqrt(2)}, yshift={7/sqrt(2)}]v1err1);
        
        \draw[line width=1pt, color=red, ->, >=stealth] ([xshift={7/sqrt(2)}, yshift={7/sqrt(2)}] v1err1) -> ([xshift={7/sqrt(2)}, yshift={7/sqrt(2)}]v1err2);
        
        \draw[line width=1pt, color=red, ->, >=stealth, dashed] ([xshift={-7/sqrt(2)}, yshift={-7/sqrt(2)}] v1err2) -> ([xshift={0}, yshift={-7*sqrt(2)}] v1err1);
    
        \draw[line width=1pt, color=red, ->, >=stealth] ([xshift={0}, yshift={-7*sqrt(2)}] v1err1) -> ([xshift={7/sqrt(2)}, yshift={-7/sqrt(2)}]v1err3);

        \draw[line width=1pt, color=red, ->, >=stealth, dashed] ([xshift={-7/sqrt(2)}, yshift={7/sqrt(2)}]v1err3) -> ([xshift={-7*sqrt(2)}, yshift={0}] v1err1);
        
        \draw[line width=1pt, color=red, ->, >=stealth, dashed] ([xshift={-7*sqrt(2)}, yshift={0}] v1err1) -> ([xshift={0}, yshift={-7*sqrt(2)}]v1);

        \draw[line width=1pt, color=ForestGreen, ->, >=stealth] ([xshift={0}, yshift={-7*sqrt(2)}]v1) -> ([xshift={7/sqrt(2)}, yshift={-7/sqrt(2)}] v2);
        \draw[line width=1pt, color=ForestGreen, ->, >=stealth] ([xshift={7/sqrt(2)}, yshift={-7/sqrt(2)}]v2) -> ([xshift={7*sqrt(2)}, yshift={0}] v3);
        \draw[line width=1pt, color=ForestGreen, ->, >=stealth] ([xshift={7*sqrt(2)}, yshift={0}] v3) -> ([xshift={7/sqrt(2)}, yshift={7/sqrt(2)}]v4);
        \draw[line width=1pt, color=ForestGreen, ->, >=stealth] ([xshift={7/sqrt(2)}, yshift={7/sqrt(2)}]v4) -> ([xshift={7/sqrt(2)}, yshift={7/sqrt(2)}]v5);
    \end{tikzpicture}}
    \caption{The picture represents the runtime of \cref{alg:noisy-tree}. The thick green path is $P'_{i,j}$ for some $i$ and $j$. The walk corresponding to the runtime of \cref{alg:noisy-tree} is represented with thin arrows: \textcolor{ForestGreen}{green} arrows represent \emph{good} rounds, \textcolor{red}{solid red} arrows represent bad rounds where protocol makes a mistake, and \textcolor{red}{dashed red} arrows represent bad rounds where protocol backtracks.}
    \label{fig:error-boosting-algorithm}
\end{figure}

Since \cref{alg:noisy-tree} performs $R$ rounds using in each round at most 2 instances of the randomized
\textsc{Equality} protocol with error $1/4$, the total amount of communication is at most $O(R) =
O(\max\{d, \log(1/\delta)\})$ as desired. Let us now verify correctness.

For any inputs $i, j$ there is a unique root-to-leaf path $P_{i,j}$ taken in $T$ ending at some leaf
$v$, and a corresponding unique path $P'_{i,j}$ in $T'$ which terminates at the subtree $L_v$. For
any execution of \cref{alg:noisy-tree}, we say a round $r$ is ``good'' if the pointer $c$ starts the
round on a vertex in $P'_{i,j} \cup L_v$ and also ends the round on a vertex in $P'_{i,j} \cup L_v$.
We say round $r$ is ``bad'' otherwise.  Write $\bm{g}$ for the number of good rounds and $\bm{b}$
for the number of bad rounds, which are random variables satisfying $R = \bm{b} + \bm{g}$.

\begin{claim}
If $\bm{g} > d$ then the protocol produces a correct output.
\end{claim}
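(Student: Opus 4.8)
The plan is to fix inputs $i,j$ and let $v$ be the leaf reached by $P_{i,j}$ in $T$, so $\ell(v)=M(i,j)$; it then suffices to show that $\bm g>d$ forces the pointer $c$ to end inside the subtree $L_v$, since the algorithm outputs $\ell(v)$ in that case. Write the correct path as $P'_{i,j}=(v'_0=\mathrm{root},v'_1,\dots,v'_{d'},\dots)$, where $v'_{d'}$ is the root of $L_v$ and $d'=\mathrm{depth}(v)\le d$, the path continuing down a spine of $L_v$ after $v'_{d'}$.

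The key structural fact is that whenever $c$ lies on $P'_{i,j}$ the combined equality check of \cref{alg:noisy-tree} has true answer ``equal'': every left-branch ancestor $q_\ell$ of $c$ genuinely satisfies $a_{q_\ell}(i)=b_{q_\ell}(j)$ — that is precisely why $P_{i,j}$ turned left at $q_\ell$ — so the two tuples agree coordinatewise. Therefore a round that begins with $c$ on $P'_{i,j}$ and has both \textsc{Eq} subroutines answer correctly neither backtracks nor misreads the current query, and advances $c$ one step along $P'_{i,j}$ (and keeps $c$ inside $L_v$ once it has reached $v'_{d'}$). The only other way a good round can begin on $P'_{i,j}$ is for the combined check to err and send $c$ back to its parent; so, roughly, a good round either moves $c$ one step down $P'_{i,j}$ or backtracks it.

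I would then track progress with the potential $\Phi(c)=$ depth of the deepest vertex of $P'_{i,j}\cup L_v$ that is an ancestor of $c$ (inclusive of $c$). One checks: $\Phi(c)\ge d'$ exactly when $c\in L_v$; $\Phi$ changes by at most $1$ per round; a good round increases $\Phi$ by $1$ (or, once $c\in L_v$, keeps it $\ge d'$); and a bad round never reduces $\Phi$ below where the good rounds have brought it — $c$ leaves $P'_{i,j}\cup L_v$ only via a wrong-child step at some $v'_k$ (so $\Phi=k$ there), the value $\Phi$ is frozen at $k$ while $c$ wanders off $P'_{i,j}\cup L_v$, and $c$ re-enters only by backtracking up to $v'_k$ (again $\Phi=k$). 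Since $\Phi(\mathrm{root})=0$ and there are $\bm g>d\ge d'$ good rounds, we get $\Phi(c)\ge d'$ at the end, hence $c\in L_v$ and the protocol outputs $\ell(v)=M(i,j)$.

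The hard part is the bad-round bookkeeping — i.e., making the progress measure genuinely monotone. A naive ``depth of $c$'' potential can be dragged down by a backtracking step, so progress must be measured as the committed depth along $P'_{i,j}$ with all of $L_v$ regarded as the target, and one must verify that neither of the two moves available to a bad round (pushing $c$ onto a wrong child, or backtracking it) ever cancels progress already made by a good round; only then does the mere count $\bm g>d$ suffice to land $c$ in $L_v$.
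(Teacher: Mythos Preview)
Your proposal sets up the right structure but misses the one fact that drives the argument: the \textsc{Equality} protocol used in the algorithm has \emph{one-sided error} --- when the two strings are genuinely equal, it outputs ``equal'' with probability $1$. You correctly note that whenever $c$ lies on $P'_{i,j}$ the combined check has true answer ``equal'', but you then allow for the possibility that this check errs and sends $c$ back to its parent. With one-sided error that cannot happen: from any vertex of $P'_{i,j}$ (including the root of $L_v$) the combined check is guaranteed to answer ``equal'', so $c$ never backtracks from a vertex of $P'_{i,j}$. Hence every good round that starts on $P'_{i,j}$ necessarily advances $c$ one step along it, and the paper's proof collapses to two lines: after $\bm g$ good rounds $c$ is a descendant of the $\bm g$-th vertex of $P'_{i,j}$, and $\bm g > d$ places that vertex inside $L_v$.

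Without one-sided error your potential argument does not go through, and your write-up is internally inconsistent on exactly this point. In your second paragraph you say a good round starting on $P'_{i,j}$ may either advance $c$ or backtrack it; in your third paragraph you assert that each good round increases $\Phi$ by $1$. A backtracking step $v'_k \to v'_{k-1}$ is a good round (it starts and ends on $P'_{i,j}$) yet it drops $\Phi$ from $k$ to $k-1$, so both statements cannot hold. Indeed, if the \textsc{Equality} subroutine had two-sided error, the claim as stated would simply be false: an execution in which $c$ oscillates between $v'_0$ and $v'_1$ for all $R$ rounds has every round good, so $\bm g = R > d$, yet $c$ never enters $L_v$. The ``hard part'' you flag --- bad-round bookkeeping and making $\Phi$ monotone --- is not where the work is; the entire claim follows immediately once you invoke the one-sidedness of \textsc{Equality}.
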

\begin{proof}[Proof of claim]
Observe that, if $c \in P'_{i,j}$ at the start of round $r$, then the counter cannot move back up,
because the \textsc{Equality} protocol has one-sided error and will correctly report that the
concatenated strings are equal with probability 1.  So the protocol must have terminated with the
counter $c$ at a descendent of the $g^{th}$ node of $P'_{i,j}$. Since $\bm{g} > d$, the protocol
terminated with $c$ in the subtree of $T'$ that replaced the final node $v$ of $P_{i,j}$, meaning
that it will output the correct value.
\end{proof}

We say that the protocol \emph{makes a mistake} in round $r$ if the randomized \textsc{Equality}
protocol erroneously outputs ``equal'' in \cref{ntcheckabove} when $(a_{q_1}(i), \dotsc,
a_{q_t}(i)) \neq (b_{q_1}(j), \dotsc, b_{q_t}(j))$, or if these tuples are truly equal but the
protocol erroneously reports ``equal'' in \cref{ntcheckcurrent} when $a_c(i) \neq b_c(j)$.
Define the random variable $\bm{m}_r \define 1$ if the protocol makes a mistake in round $r$
and 0 otherwise, and define $\bm{m} = \sum_{r=1}^R \bm{m}_r$ for the total number of rounds where
the protocol makes a mistake.

\begin{claim}
$\bm{b} \leq 2\bm{m}$.
\end{claim}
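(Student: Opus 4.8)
The plan is a potential/charging argument: track how far the pointer $c$ has strayed from the ``good region'' $G \define P'_{i,j}\cup L_v$ and charge each bad round against a mistake. Define $\Phi_r$ to be the number of edges on the path in $T'$ from the position of $c$ at the start of round $r$ up to the deepest ancestor of $c$ lying in $G$; thus $\Phi_r = 0$ exactly when $c\in G$, and $\Phi_1 = 0$ since $c$ starts at the root. Note that $G$ is ancestor-closed in $T'$ and that each round changes $\Phi$ by at most $1$, since $c$ moves by one edge. A round is bad precisely when $\Phi_r > 0$, or when $\Phi_r = 0$ but $\Phi_{r+1} > 0$.

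The key structural observation, which exploits the \emph{one-sided} error of the \textsc{Equality} protocol, is: at the start of any round, the tuple $(a_{q_1}(i),\dots,a_{q_t}(i))$ compared against $(b_{q_1}(j),\dots,b_{q_t}(j))$ in \cref{ntcheckabove} is \emph{truly equal} when $c\in G$ and \emph{truly unequal} when $c\notin G$. Each $q_\ell$ is an ancestor of $c$ at which the walk turned left. If $c\in G$, then the only part of $G$ lying below the left child of $q_\ell$ is reached by $P'_{i,j}$ \emph{turning left} at $q_\ell$ (using ancestor-closedness and the internal structure of the gadget $L_v$), which forces $a_{q_\ell}(i) = b_{q_\ell}(j)$; so the tuple is equal. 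If $c\notin G$, let $w$ be the deepest ancestor of $c$ in $G$; the walk left $G$ by taking the child of $w$ off the correct continuation. It could not have turned \emph{right} at $w$, since that would require the \textsc{Equality} protocol to report ``not equal'' at $w$, which is impossible when $a_w(i) = b_w(j)$; hence it turned left at $w$ with $a_w(i)\neq b_w(j)$, so $w\in\{q_1,\dots,q_t\}$ witnesses that the tuple is unequal.

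With this, the case analysis is short. If $\Phi_r > 0$: \cref{ntcheckabove} faces a truly unequal tuple, so absent a mistake the protocol detects the inequality and backtracks, giving $\Phi_{r+1} = \Phi_r - 1$; the only way to avoid this is a mistake in \cref{ntcheckabove}, after which $c$ moves down and $\Phi_{r+1} = \Phi_r + 1$. If $\Phi_r = 0$ and $\Phi_{r+1} > 0$: by the observation \cref{ntcheckabove} faces a truly equal tuple, so there is no backtrack (and no mistake there); $c$ then leaves $G$ in \cref{ntcheckcurrent}, which by one-sidedness can only happen by turning left at the last $G$-node $c$ with $a_c(i)\neq b_c(j)$ — exactly a mistake in \cref{ntcheckcurrent}. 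Consequently, every round with $\Phi_{r+1} = \Phi_r + 1$ contains a mistake, and every bad round with \emph{no} mistake has $\Phi_{r+1} = \Phi_r - 1$.

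Now combine. Write $\bm{b} = \bm{b}_{\mathrm{yes}} + \bm{b}_{\mathrm{no}}$ for the bad rounds with and without a mistake; then $\bm{b}_{\mathrm{yes}} \le \bm{m}$ trivially. Each round counted by $\bm{b}_{\mathrm{no}}$ satisfies $\Phi_{r+1} = \Phi_r - 1$, and since $\Phi_1 = 0$, $\Phi$ stays nonnegative, and $\Phi$ moves in steps of $0$ or $\pm 1$, the number of $(-1)$-steps is at most the number of $(+1)$-steps, which is at most $\bm{m}$ by the previous paragraph; hence $\bm{b}_{\mathrm{no}} \le \bm{m}$, and so $\bm{b} \le 2\bm{m}$. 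The main obstacle is the structural observation about \cref{ntcheckabove}: making it fully rigorous requires carefully combining ancestor-closedness of $G$, the fact that no node of $G$ lies below an ``off-ramp'' of $P'_{i,j}$, the internal structure of the gadget subtrees $L_v$ (all of whose nodes query the same $a_{p(v)}, b_{p(v)}$), and the one-sided-error property that the \textsc{Equality} protocol never erroneously reports ``not equal''.
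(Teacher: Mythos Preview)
Your argument is correct and is essentially the paper's charging argument recast through a potential function: the paper charges each bad round directly to a mistake round (to itself if $c$ moved down, or to the most recent descent from the parent if $c$ moved up, each mistake being charged at most twice), whereas you track the distance $\Phi$ from $G$ and use that the number of $(-1)$-steps is bounded by the number of $(+1)$-steps, which are all mistake rounds. One small imprecision worth flagging: your structural claim that the tuple in \cref{ntcheckabove} is ``truly equal when $c\in G$'' fails when $c$ lies strictly inside $L_v$ with $v$ a right child of its parent (a left-turn inside $L_v$ then witnesses an inequality), but this is harmless for your proof since from any such $c$ both children and the parent remain in $L_v\subseteq G$, so $\Phi$ cannot increase from there---a subtlety you rightly flag in your closing caveat about the gadget $L_v$.
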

\begin{proof}[Proof of claim]
Consider any bad round $r$. Either the counter $c$ moves up or down the tree. If the counter $c$
moves up to its parent $c'$, then we charge the bad round to the most recent round $r' < r$ where
the counter started at $c'$ and observe that the protocol must have made a mistake at round $r'$.
Otherwise, if the counter $c$ moves down the tree, we charge the bad round to $r$ itself and observe
that the protocol makes a mistake in round $r$. Then we see that each round where a mistake is made
is charged for at most $2$ bad rounds (one for itself, if the counter moves down; and one for the
earliest round where the counter returns to its current position).
\end{proof}

If the protocol outputs the incorrect value then we must have $\bm{g} = R-\bm{b} \leq d$ and therefore $R-d
\leq \bm{b} \leq 2\bm{m}$, so $\bm{m} \geq \frac{R-d}{2}$. It remains to bound the number of
mistakes $\bm{m}$; we write $\bm{m} = \sum_{r=1}^R \bm{m}_r$ where $\bm{m}_r$ indicates whether the
protocol makes a mistake in round $r$.

In any round $r$, conditional an all previous rounds, the probability that the protocol makes a
mistake is at most $1/4$: either there is an ancestor node in $P$ where a mistake was made in an
earlier round, in which case a mistake is made in round $r$ only if it makes an error in
\cref{ntcheckabove}; or the path $P$ is entirely correct and the protocol makes a mistake only
if there is an error in \cref{ntcheckcurrent}. So $\Pruc{}{\bm{m}_r = 1}{\bm{m}_1, \dotsc,
\bm{m}_{r-1}} \leq 1/4$ for every $r$ and $\mu \define \Ex{\bm{m}} \leq R/4$. Using known
concentration bounds (\eg Theorem 3.1 of \cite{IK10}), for any
$\tfrac{1}{4} \leq \gamma \leq 1$ we have $\Pr{ \bm{m} \geq \gamma R } \leq e^{-R \cdot D(\gamma \|
\delta) }$; in particular, since $R = 4 \cdot \max\{d, C\log(1/\delta)\}$, we have $\frac{R-d}{2}
\geq \frac{3R}{8}$, so for constant $\kappa \define D\left(\tfrac{3}{8} \| \tfrac{1}{4}\right) > 0$,
\[
  \Pr{ \bm{m} \geq \frac{R-d}{2} }
  \leq \Pr{ \bm{m} \geq \frac{3}{8} \cdot R } \leq e^{- R \cdot \kappa}
  \leq e^{- 4C\log(1/\delta) \cdot \kappa} \leq \delta \,,
\]
when we choose $C$ to be a sufficiently large constant.
\end{proof}


\section{No Better Boosting for Hamming Distance, and Consequences}

We now complete the proof of \cref{thm:intro-main-informal} by showing that
\cref{eq:intro-informal-badboosting} is tight for the \textsc{1-Hamming Distance} oracle.
We prove this with a direct-sum result, showing that computing $k$ independent copies of
\textsc{1-Hamming Distance} cannot be computed without the $\log k$-factor loss from boosting.
Let us define the direct sum problems.

For any function $f\colon X \times Y \to Z$ and any $k \in \bN$, we define function $f^{\otimes k}$
as the function which computes $k$ copies of $f$, \ie $f^{\otimes k}\colon X^k \times Y^k \to Z^k$
where on inputs $x \in X^k$ and $y \in Y^k$,
\[
  f^{\otimes k}(x,y) = (f(x_1, y_1), f(x_2, y_2), \dotsc, f(x_k, y_k)) \,.
\]
It is easy to see that
$\D^{\textsc{HD}_1}((\HD{1}{n})^{\otimes k}) = k$ for $n > 1$ since we can compute each copy of
$\HD{1}{n}$ with one query. In this section we prove:
\begin{boxtheorem}
\label{thm:no-boosting-for-hd1}
For all $n \geq 4k^2$, $\normalfont\R_{1/4}((\HD{1}{n})^{\otimes k}) = \Omega(k \log k)$.
Consequently, there exist matrices $M$ such that
\[
\normalfont\R_{1/4}(M)
  = \Omega\left( \D^{\HD{1}{}}(M) \cdot \log \D^{\HD{1}{}}(M)  \right) \,.
\]
\end{boxtheorem}

Our proof has two further consequences. The first is about randomized parity decision trees (see \eg
\cite{CGS21} for definitions and background on parity decision trees): it is not hard to see that
the randomized parity decision tree complexity of the \textsc{$1$-Hamming Weight} function
$\textsc{HW}_1^n\colon \{0,1\}^n \to \{0,1\}$ defined by $\textsc{HW}_1^n(x) = 1$ iff $|x| = 1$ is $\RPDT(\textsc{HW}_1^n) = O(1)$. Since $\HD{1}{n}(x,y) = \textsc{HW}_1^n(x\oplus y)$ and one can simulate each parity query with two bits of communication, we get
$\R_{1/4}((\HD{1}{n})^{\otimes k}) = O(\RPDT_{1/4}((\mathsf{HW}_1^n)^{\otimes k}))$.  Together these
statements imply:

\begin{boxcorollary}
\label{cor:intro-rpdt}
For $n \geq 4k^2$, 
$\normalfont\mathsf{RPDT}((\textsc{HW}_1^n)^{\otimes k}) = \Omega( k \log k )$.
\end{boxcorollary}

The second consequence of our proof, explained in \cref{section:eq-hd-separation}, is the optimal
$\Omega(\log n)$ lower bound on the number of \textsc{Equality} queries required to compute
$\HD{1}{n}$. All of these results come from our main lemma, a randomized reduction from the
\textsc{$k$-Hamming Distance} problem to $(\HD{1}{n})^{\otimes O(k)}$.  The \textsc{$k$-Hamming
Distance} communication problem is:

\begin{boxdefinition}[\textsc{$k$-Hamming Distance}]
For every $n, k \in \bN$, the \textsc{$k$-Hamming Distance} problem $\HD{k}{n} \colon \zo^n \to \zo$
is defined as
\[
  \HD{k}{n} = \begin{cases}
    1 &\text{ if } \dist(x,y) \leq k \\
    0 &\text{ otherwise.}
  \end{cases}
\]
\end{boxdefinition}
(Note that the randomized complexity $\R(\HD{k}{n})$ does not change by more than a constant factor
if we replace $\dist(x,y) \leq k$ with $\dist(x,y)=k$ in the definition, because each version of the
problem can be computed by at most 2 oracle queries to the other version.)

\subsection{Randomized Reduction Lemma}

\begin{boxlemma}
\label{lem:reduction-to-direct-sum}
Let $c = 9/10$. Then for all $k \in \bN$, $R = \log_{1/c} k$ and $\delta_0, \dots, \delta_R \in (0,1)$ such that $\sum_{i = 0}^R \delta_i \le 1/10$ we have
\[
\normalfont\R_{1/4}(\HD{k}{n})
  = O\left(\sum_{i = 0}^{R} \R_{\delta_i}((\HD{1}{n})^{\otimes (4k \cdot c^{i})}\right) .
\]
\end{boxlemma}

We thank Shachar Lovett for pointing out an improvement in this reduction which removed a factor
$\log\log \D^{\HD{1}{}}(M)$ from the denominator in \cref{thm:no-boosting-for-hd1}.

\begin{proof}
Our protocol for $\HD{k}{n}(x,y)$ is \cref{alg:reduction}. Let $c = 9/10$ and let $C$ be some
constant to be determined later.  For a string $x \in \zo^n$ and a set $S \subseteq [n]$, we will
write $x_S \in \zo^{|S|}$ for the substring of $x$ on coordinates $S$.

\begin{algorithm}
    \begin{algorithmic}[1]
    \Input{$x,y \in \{0,1\}^n$.}
    \State Initialize $T \gets [n]; \ell \gets k; j \gets 0.$
    \While{$\ell > C$} 
    \State Let $S_1, \dots, S_{4\ell}$ be a uniformly random partition of $T$.
    \State Let $u_i = x_{S_i};\; v_i = y_{S_i}$ be the substrings of $x,y$ on subsets $S_i$ for all $i \in [4\ell]$. 
    \State \label{invoke}Run $\delta_j$-error protocols for $(\HD{1}{n})^{\otimes
4\ell}\big((u_1, v_1), \dots, (u_{4\ell}, v_{4\ell})\big)$ and $\textsc{Eq}_n^{\otimes
4\ell}\big((u_1, v_1), \dots, (u_{4\ell}, v_{4\ell})\big)$. 
    \LComment{Assuming these subroutines are correct, we know $\dist(u_i, v_i)$ exactly,
    if $\dist(u_i,v_i) \in \zo$.}
    \State $w_i \gets \dist(u_i, v_i)$ if $\dist(u_i, v_i) \le 1$ and $2$ otherwise.
    \LComment{We can safely output 0 if we see more than $\ell$ differences:}
    \OneLineIf{$\sum_{i \in [4\ell]} w_i > \ell$}{{\bf return} 0.} \label{haltzero} 
    \LComment{In the next step, isolate the sets $S_i$ where the protocol finds exactly one difference.}
    \State{$s \gets |\{i \in [4\ell] \mid w_i = 1\}|$.}
    \LComment{If $\dist(x_T, y_T) > \ell$, we should see many sets with exactly one difference;
      output 1 otherwise:}
    \OneLineIf{$s < \ell/10$}{{\bf return} 1.} \label{haltone}
    \LComment{Throw out sets $S_i$ with at most one difference; update the number
$\ell$ of remaining differences.}
    \State $T \gets \bigcup_{i \in [4\ell]\colon w_i = 2} S_i$.
    \State $\ell \gets \ell - s$.
    \State $j \gets j + 1$.
    \EndWhile
    \State {\bf return} $\HD{\ell}{|T|}(x_T, y_T)$. 
    \end{algorithmic}
\caption{Hamming Distance Reduction}
\label{alg:reduction}
\end{algorithm}
First, let us calculate the cost of the protocol. As guaranteed by \cref{haltone}, at each
iteration the value of $\ell$ is reduced to at most $\tfrac{9}{10} \ell = c\ell$, so there are at
most $R = \log_{1/c} k$ iterations, and in the $i$-th iteration (indexed from zero), $\ell \le k c^i$.
Hence, at each iteration, the communication cost is at most
\[
\R_{\delta_i}((\HD{1}{n})^{\otimes 4 k c^i}) + \R_{\delta_i}((\EQ)^{\otimes 4k c^i})
\leq 2 \cdot \R_{\delta_i}((\HD{1}{n})^{\otimes 4 k c^i}) \,.
\]
Since $C$ is a constant, the cost of the final step with $\ell \leq C$ is $O(1)$.
    
Now let us estimate the error. Since there are at most $R = \log_{1/c} k$ iterations and the 2
protocols in \cref{invoke}~each have error at most $\delta_i$ in the $i$-th iteration, the total probability of an
error occurring in \cref{invoke}~is at most $2\cdot \sum_{i=0}^R \delta_i \le 1/5$. We may therefore assume from now on the
perfect correctness of the values $w_i$.

Under this assumption, the protocol maintains the invariant that the number of bits outside $T$
where $x,y$ differ is $\dist(x_{[n] \setminus T},
y_{[n]\setminus T}) = k - \ell$, so it cannot output the incorrect value in \cref{haltzero}. Let us
consider the probability that the protocol outputs the incorrect value in \cref{haltone}. This
only occurs if $\dist(x_T, y_T) > \ell$ and $s < \ell / 10$.  We need to estimate $\Pr{|\{i \in
[4\ell] \mid w_i = 1\}| \ge \ell/10}$. The size of the set $\{i \in [4\ell] \mid w_i > 0\}$ is the
number of unique colors we get when coloring each element $i$ of the set $\Delta_T \define \{ i \in T :
x_i \neq y_i \}$ of cardinality $|\Delta_T| = \dist(x_T, y_T)$ uniformly with color $\bm{\chi}_i
\sim [4\ell]$; call this number $\bm{\chi} \define |\{ \bm{\chi}_i : i \in \Delta_T\}|$.  We know that \cref{haltone}~does not halt, so $|\{i \in [4\ell]
\mid w_i = 2\}| < \ell/2$. Then, if $\bm{\chi} \geq (6/10)\ell$, it must be that $|\{ i \in [4\ell] : w_i = 1 \}|
\geq \bm{\chi} - \ell/2 \geq \ell/10$, so \cref{haltone}~does not halt. For
simplicity, since
$|\Delta_T| \geq \ell$, in the next expression we consider only the first $\ell$ elements of $\Delta_T$ and identify them with the set $[\ell]$. The probability we need
to estimate is 
    \begin{align*} \Pr{|\{\bm{\chi}_i \mid i \in [\ell] \}| \le 0.6\ell} &\le \sum_{S \in
\binom{[\ell]}{0.6\ell}} \Pr{\{\bm{\chi}_i \mid i \in [\ell]\} \subseteq \{\bm{\chi}_i \mid i \in S\}}\\
    &\le \binom{\ell}{0.6\ell} \cdot \left(\frac{6 }{10 \cdot 4}\right)^{0.4\ell} < 2^\ell \cdot 2^{\log_2 (3/20) \cdot 0.4 \ell} \le 2^{-.01 \ell}. \end{align*}
    We have that the total error is bounded by $\sum_{\ell=C}^\infty 2^{-.01 \ell} \le 2^{-.01 C} / (1-2^{-.01}) \le 100 \cdot 2^{-.01 C}$, so choosing $C$ to be large enough we get arbitrarily small constant error. 
\end{proof}

\subsection{Direct Sum Theorem for \texorpdfstring{\textsc{$1$-Hamming Distance}}{1 Hamming Distance}}

We require the lower bound on the communication cost of $\HD{k}{n}$:
\begin{boxtheorem}[\cite{Sag18}]
\label{thm:hd-lowerbound}
    For all $k^2 \leq \delta n$, $\normalfont\R_{\delta}(\HD{k}{n}) = \Omega(k \log(k/\delta))$.
\end{boxtheorem}
Now we can prove \cref{thm:no-boosting-for-hd1}.
\begin{proof}[Proof of \cref{thm:no-boosting-for-hd1}]
    Assume for contradiction that $\R_{1/4}((\HD{1}{n})^{\otimes k}) = o(k \log k)$, so
by standard boosting,
\[
  \forall \delta \in (0,1), \qquad \R_\delta((\HD{1}{n})^{\otimes k}) = o\left(k \log k \cdot \log
\frac{1}{\delta} \right).
\]
Then by \cref{lem:reduction-to-direct-sum}, with $c = 9/10$, $R = \log_{1/c} k$, and $\delta_i =
1/(200 \cdot c^i)$ for $i \in \{0, \dots, R\}$,
\begin{align*}
\R_{1/4}( \HD{k}{n} )
&= O\left( \sum_{i=0}^R \R_{\delta_i}( (\HD{1}{n})^{4kc^i} ) \right) 
= \sum_{i=0}^R o \left(  kc^i \log (kc^i) \log (200 \cdot c^i) \right) \\
&= \sum_{i=0}^R o(c^i \log (200 \cdot c^i) \cdot k \log k) \\
&= o(k \log k) \sum_{i=0}^R c^i \cdot i = o(k \log k) \,,
\end{align*}
which contradicts \cref{thm:hd-lowerbound} when $n \geq 4k^2$.
\end{proof}

Our \cref{cor:intro-rpdt} for randomized parity decision trees follows easily
from this theorem since a randomized parity decision tree for \textsc{1-Hamming Weight},
(or $k$ copies of it), can be simulated by a randomized communication protocol to
compute \textsc{1-Hamming Distance} (or $k$ copies of it).

\subsection{Lower Bound on Computing \texorpdfstring{$1$}{1}-Hamming Distance with \texorpdfstring{\textsc{Equality}}{Equality} Queries}
\label{section:eq-hd-separation}

Recently, \cite{HHH22dimfree,HWZ22} showed that \textsc{Equality} is not complete for the class
$\BPPZ$ of constant-cost communication problems, and \cite{FHHH24} showed that there is \emph{no}
complete problem for this class. The independent and concurrent proofs of \cite{HHH22dimfree,HWZ22}
both showed that $\D^\EQ(\HD{1}{n}) = \omega(1)$. We showed above that functions which reduce to
\textsc{Equality} have better boosting, while \textsc{$1$-Hamming Distance} does not, so
\textsc{$1$-Hamming Distance} cannot reduce to \textsc{Equality} -- this gives a new and unexpected
proof that \textsc{Equality} is not complete for $\BPPZ$:

\begin{boxcorollary}
$\normalfont\DEQ(\HD{1}{n}) = \omega(1)$. Therefore, \textsc{Equality} is not a complete problem for
$\BPPZ$.
\end{boxcorollary}

There is an easy upper bound of $\DEQ(\HD{1}{n}) = O(\log n)$ obtained using binary search.  With a
more careful argument we can strengthen the above result and get a new proof that this is optimal,
matching the lower bound already given in \cite{HHH22dimfree} by Fourier analysis.

\begin{boxtheorem}
\label{thm:hd-lb}
  $\normalfont \DEQ(\HD{1}{n}) = \Theta(\log n)$.
\end{boxtheorem}
\begin{proof}
Assume for the sake of contradiction that $\DEQ(\HD{1}{n}) = o(\log n)$, which immediately implies
$\DEQ((\HD{1}{n})^{\otimes k}) = o(k \log n)$. By \cref{thm:eq-betterboosting} we then have
$\R_{\delta}((\HD{1}{n})^{\otimes k}) \le o(k \log n) + O(\log 1/\delta)$. Applying
\cref{lem:reduction-to-direct-sum} we get, for $c = 9/10$ and $\delta = \delta_i = \frac{1}{11 \log_{1/c} k}$ for every $i\in \{0,1,\dots,R\}$,
\begin{align*}
  \R_{1/4}(\HD{k}{n})
    &= O\left( \sum_{i=0}^{\log_{1/c} k} \R_\delta( (\HD{1}{n})^{\otimes 4kc^i}) \right) \\
    &= \sum_{i=0}^{\log_{1/c} k} (o(kc^i \log n) + O(\log \log k)) = o(k \log n) + O(\log k \log\log k).
\end{align*}
Applying this inequality with $n = k^4$ we get $\R_{1/4}(\HD{k}{k^4}) = o(k \log k)$, which contradicts \cref{thm:hd-lowerbound}.
\end{proof}

\begin{remark}
\label{remark:additive}
It is interesting that the additive $O(\log(1/\delta))$ in \cref{thm:eq-betterboosting} is required
for this proof. If the $\log(1/\delta)$ term was multiplicative, we would get a bound of $o(k \log n
\cdot \log\log k)$ in the sum, giving $o(k \log k \log\log k)$ when we set $n = k^4$, which is not
in contradiction with \cref{thm:hd-lowerbound}. So the weaker (but still non-trivial) bound
$\R_{1/4}(M) = O(\DEQ(M))$ would not suffice, although it would still allow us to conclude
$\DEQ(\HD{1}{n}) = \omega(1)$. The trivial bound of $\R_{1/4}(M) = O(\DEQ(M) \log \DEQ(M))$ would
not allow us to prove even $\DEQ(\HD{1}{n}) = \omega(1)$.
\end{remark}

\section{Noisy-Tree Fails for Other Oracles}
\label{section:noisy-tree-fails}

At this point we cannot determine whether better boosting is possible \emph{only} for the
constant-cost protocols which reduce to \textsc{Equality}. But we can make some progress towards
this question by observing that the ``noisy-tree'' protocol in \cref{thm:eq-betterboosting} does not
work for any other oracles in $\BPPZ$. To state this formally, we must define a reasonable
generalization of that protocol.

The noisy-tree protocol relied on two properties of the \textsc{Equality} oracle. The first is that
it has one-sided error (the protocol for \textsc{Equality} will output the correct answer with
probability 1 when the inputs are equal). The second property is what we will call the
\emph{conjunction property}:

A query set $\cQ$ has the \emph{conjunction property} if there exists a constant $c$ such that for
all $d \in \bN$ and all $Q_1, \dotsc, Q_d \in \cQ$, $\D^\cQ\left( \bigwedge_{i=1}^d Q_i \right) \leq
c$ where $\bigwedge_{i=1}^d Q_i$ denotes the problem of computing
\[
  Q_1(x_1,y_1) \wedge Q_2(x_2,y_2) \wedge \dotsm \wedge Q_d(x_d,y_d) \,.
\]
on $d$ pairs of inputs $(x_1, y_1), \dotsc, (x_d, y_d)$. For example, \textsc{Equality} has
the conjunction property because computing
\[
  \Eq(x_1, y_1) \wedge \Eq(x_2, y_2) \wedge \dotsm \wedge \Eq(x_d, y_d)
\]
can be done with the single query $\Eq\left(((x_1, x_2, \dotsc, x_d), (y_1, y_2, \dotsc, y_d)
\right)$.
Following the proof of \cref{thm:eq-betterboosting}, we could claim the following result, which
would hold even for oracles $\cQ$ that have non-constant cost (but still using arbitrary-size oracle
queries\footnote{Arbitrary-size oracle queries may be sensible for non-constant cost problems that
still have bounded VC dimension, \eg \textsc{Greater-Than} oracles as in \cite{CLV19}.}):\\
\vspace{-1em}

\noindent
\textbf{``Theorem.''}\emph{
Let $\cQ$ be any query set satisfying the conjunction property, and whose elements $Q \in \cQ$ admit
one-sided error randomized communication protocols with cost $O(\R(\cQ))$. Then for any $M \in \zo^{N
\times N}$, $\R_\delta(M) = O( \D^\cQ(M) \cdot \R_{1/4}(\cQ) + \log\tfrac{1}{\delta})$.}\\

But it turns out that this does not really generalize \cref{thm:eq-betterboosting}, even if we
require only the conjunction property (\ie ignore one-sided error):

\begin{proposition}
\label{prop:no-generalization}
If $\cQ$ is a query set that satisfies the conjunction property, then it is either a subset of the
query set of \textsc{Equality}, or it is the set of all matrices.
\end{proposition}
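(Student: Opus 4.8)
The plan is to prove the contrapositive of the nontrivial alternative: assuming $\cQ$ is \emph{not} a subset of the query set of $\EQ$, I will show $\cQ$ is the set of all Boolean matrices. First I would recall the standard description of the query set of $\EQ$, namely $\QS(\{I_{N,N}:N\in\bN\})$: a Boolean matrix lies in it exactly when it has no $2\times 2$ submatrix equal, after permuting its two rows and two columns, to $J\define\left(\begin{smallmatrix}1&1\\1&0\end{smallmatrix}\right)$. (One direction is clear since $J$ is not of the form $\EQ(a(\cdot),b(\cdot))$; conversely, if $M$ avoids this pattern then any two rows of $M$ are identical or have disjoint supports, and likewise for columns, which forces $M$ to be ``blocky'' and hence obtainable from an identity matrix by duplicating/permuting rows and columns and taking a submatrix.) Since $\cQ$ is closed under submatrices and permutations and, by assumption, contains some matrix outside this class, we get $J\in\cQ$. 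Consequently, for every $d$ and every $i\in[d]$ the $2^d\times 2^d$ matrix $Q_i$ with rows and columns indexed by $\zo^d$ and $Q_i(u,w)=\neg(u_i\wedge w_i)$ has only two distinct rows and two distinct columns, whose ``core'' is exactly $J$; so $Q_i$ is obtained from $J$ by duplicating its rows and columns and permuting, hence $Q_i\in\cQ$.

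Next I would invoke the conjunction property with these $Q_i$. Observe that $\bigwedge_{i=1}^{d}Q_i$ is precisely the Disjointness matrix $F_d$ on $\zo^d$, with $F_d(u,w)=1$ iff $\supp(u)\cap\supp(w)=\emptyset$. So if $c$ is the constant furnished by the conjunction property, then $\D^\cQ(F_d)\le c$ for every $d$. A depth-$c$ decision tree has at most $2^c-1$ internal nodes, hence uses at most $m\le 2^c-1$ distinct query matrices $R_1,\dots,R_m\in\cQ$ (each of shape $2^d\times 2^d$), and the leaf reached on an input $(u,w)$---hence the value $F_d(u,w)$---is a fixed function of the bit-tuple $(R_1(u,w),\dots,R_m(u,w))$. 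Thus $F_d=h(R_1,\dots,R_m)$ entrywise for some $h\colon\zo^m\to\zo$ and some $R_1,\dots,R_m\in\cQ$.

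The heart of the argument is a VC-dimension estimate. Let $\mathrm{VCdim}(Q)$ denote the VC dimension of the set system of rows of $Q$ viewed as subsets of its columns, and suppose toward a contradiction that $v^\ast\define\sup_{Q\in\cQ}\mathrm{VCdim}(Q)<\infty$. Each $R_l$ above has VC dimension at most $v^\ast$, so by Sauer--Shelah the rows of $h(R_1,\dots,R_m)$ restricted to any $n$ columns take at most $\prod_{l=1}^{m}(en/v^\ast)^{v^\ast}=(en/v^\ast)^{m v^\ast}$ distinct values; since a shattered set of $n$ columns would force $2^n$ distinct restricted rows, this bounds $\mathrm{VCdim}(h(R_1,\dots,R_m))$ by some $V(m,v^\ast)$ depending only on $m$ and $v^\ast$. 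But $\mathrm{VCdim}(F_d)\ge d$: for the columns $e_1,\dots,e_d$ we have $F_d(u,e_t)=1$ iff $u_t=0$, so the rows $u\in\zo^d$ realize all $2^d$ membership patterns on these columns. Taking $d>V(2^c-1,v^\ast)$ yields a contradiction, so $\cQ$ has unbounded VC dimension.

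Finally I would upgrade ``unbounded VC dimension'' to ``all matrices''. For each $v$ there is $Q\in\cQ$ shattering some $v$ columns, and the submatrix of $Q$ on those columns together with the $2^v$ rows realizing all membership patterns is exactly the shattering matrix $\Sigma_v\in\zo^{2^v\times v}$ whose rows are all of $\zo^v$; by closure under submatrices, $\Sigma_v\in\cQ$. Now any $k\times k$ Boolean matrix $M$ is a submatrix of $\Sigma_k$ (match the $k$ columns, and choose the $k$ rows of $\Sigma_k$ equal to the rows of $M$), hence $M\in\cQ$; so $\cQ$ contains every finite Boolean matrix. The only step requiring real insight is the third paragraph---converting the bounded oracle-query complexity of the high-VC-dimension Disjointness matrices into a bound on the VC dimension of $\cQ$ itself; once one has the $J$-versus-blocky dichotomy and the Sauer--Shelah bound for Boolean combinations, the rest is bookkeeping.
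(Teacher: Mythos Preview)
Your proof is correct and follows essentially the same route as the paper's: both observe that $J=\left(\begin{smallmatrix}1&1\\1&0\end{smallmatrix}\right)\in\cQ$, use the conjunction property to bound $\D^\cQ$ on a family of matrices of unbounded VC dimension (you name $\bigwedge_i J$ as Disjointness and quote $\mathrm{VCdim}(F_d)\ge d$; the paper instead proves directly that every matrix embeds as a submatrix of $\bigwedge_{i=1}^N J$), and then reach a contradiction via the Sauer--Shelah bound on Boolean combinations of bounded-VC families. One small clean-up: your blow-up of $J$ to $2^d\times 2^d$ matrices is unnecessary and slightly clashes with the paper's definition of $\bigwedge_i Q_i$ (which already places inputs on the product domain, so with your $Q_i$'s the conjunction lives on $(\zo^d)^d$, not $\zo^d$); simply taking each $Q_i=J$ gives $\bigwedge_{i=1}^d J=F_d$ on $\zo^d$ directly.
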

To prove this, we use VC dimension. The VC dimension of a Boolean matrix $M$ is the largest $d$ such
that there are $d$ columns of $M$, where the submatrix of $M$ restricted to those columns contains
all $2^d$ possible distinct rows. A family $\cF$ of matrices has \emph{bounded VC dimension} if
there is a constant $d$ such that all $M \in \cF$ have VC dimension at most $d$. If $\cF$ is closed
under taking submatrices (and permutations), then it has bounded VC dimension if and only if it is
not the family of all matrices.

\begin{proof}[Proof of \cref{prop:no-generalization}]
\newcommand{\nand}{\left[ \begin{smallmatrix}1 & 1 \\ 1 &
0\end{smallmatrix}\right]}
If $\cQ$ does not contain the matrix $\nand$, then it is not hard to see that $\cQ$ is a subset of
the query set of \textsc{Equality}.  So we suppose that $\cQ$ contains the matrix $\nand$ and
satisfies the conjunction property. We first show:

\begin{claim}
Every matrix $M \in \zo^{N \times N}$ is a submatrix of $\bigwedge_{i=1}^N Q_i$ where each $Q_i =
\nand$.
\end{claim}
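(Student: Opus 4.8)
The plan is to identify the matrix $\bigwedge_{i=1}^N Q_i$ (with every $Q_i=\nand$) as the $2^N\times 2^N$ set-disjointness matrix, and then write down an explicit embedding of an arbitrary Boolean matrix $M$ into it. First I would index the two rows and two columns of $\nand$ by the bits $0$ and $1$, so that $\nand$ is literally the matrix sending $(a,b)$ to $\neg(a\wedge b)=1-ab$. Then $\bigwedge_{i=1}^N Q_i$ is the Boolean matrix with rows and columns indexed by $x,y\in\zo^N$ whose $(x,y)$ entry equals $\bigwedge_{i=1}^N(1-x_iy_i)$, which is $1$ exactly when the sets $\{i:x_i=1\}$ and $\{i:y_i=1\}$ are disjoint. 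In other words $\bigwedge_{i=1}^N Q_i$ is the disjointness matrix on subsets of $[N]$, so the claim amounts to the standard fact that every Boolean matrix occurs as a submatrix of a large enough disjointness matrix.

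For the embedding I would, given $M\in\zo^{N\times N}$, assign to each row index $i\in[N]$ the set $A_i\define\{\,j\in[N]:M(i,j)=0\,\}$ and to each column index $j\in[N]$ the singleton $B_j\define\{j\}$. The entry of $\bigwedge_{i=1}^N Q_i$ at row $A_i$ and column $B_j$ is then $1$ iff $A_i\cap B_j=\emptyset$ iff $j\notin A_i$ iff $M(i,j)=1$; that is, it equals $M(i,j)$. Hence the submatrix of $\bigwedge_{i=1}^N Q_i$ picked out by the rows $A_1,\dots,A_N$ and the columns $B_1,\dots,B_N$ is exactly $M$. If $M$ has repeated rows then the corresponding sets $A_i$ coincide, but this causes no problem: a submatrix may select rows and columns with repetition, and in any case the query-set operations already include copying rows and columns.

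I do not expect a genuine obstacle here; the only thing requiring care is bookkeeping — pinning down the identification of the $2\times2$ block $\nand$ with $\neg\mathrm{AND}$ (hence the $N$-fold conjunction with disjointness), and getting the direction of the embedding right, with the rows of $M$ corresponding to the \emph{complemented} sets $A_i$ and the columns to singletons rather than the other way around. With the claim in hand the rest of \cref{prop:no-generalization} is short: in the remaining case $\cQ$ contains $\nand$, so the conjunction property gives $\D^\cQ\bigl(\bigwedge_{i=1}^N Q_i\bigr)=O(1)$, and then the claim together with monotonicity of $\D^\cQ$ under submatrices and row/column copying yields $\D^\cQ(M)=O(1)$ for \emph{every} Boolean matrix $M$, including matrices of arbitrarily large VC dimension; since a query set of bounded VC dimension can compute only bounded-VC matrices with $O(1)$ queries (a standard Sauer--Shelah-type argument), $\cQ$ must have unbounded VC dimension, and being closed under submatrices it is therefore the family of all matrices.
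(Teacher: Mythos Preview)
Your proof is correct and essentially the same as the paper's: the paper defines the same row map $v(x)_j\in\{1,2\}$ with $v(x)_j=2$ iff $M(x,j)=0$ (your $A_i$) and the same column map $w(y)_j=2$ iff $j=y$ (your singleton $B_j$), then verifies the entries case by case. Your identification of $\bigwedge_{i=1}^N\nand$ with the set-disjointness matrix is a clean conceptual repackaging of the same construction, and your sketch of the remainder of \cref{prop:no-generalization} matches the paper as well.
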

\begin{proof}[Proof of claim]
Let each $Q_i$ be a copy of $\nand$, so that $Q = \bigwedge_{i=1}^N Q_i$ has row space $[2]^N$ and
column space $[2]^N$. Let $M \in \zo^{N \times N}$ and map each row $x \in [N]$ of $M$ to the row
$v(x) \in [2]^N$ of $Q$ with
\[
  \forall j \in [N] : v(x)_j = \begin{cases}
    1 &\text{ if } M(x,j) = 1 \\
    2 &\text{ if } M(x,j) = 0 \,,
  \end{cases}
\]
and map each column $y \in [N]$ of $M$ to the column $w(y) \in [2]^N$ of $Q$ with
\[
  \forall j \in [N] : w(y)_j = \begin{cases}
    1 &\text{ if } j \neq y \\
    2 &\text{ if } j = y \,.
  \end{cases}
\]
For any row $x$ and column $y$ of $M$, if $M(x,y) = 1$ then
\[
  Q_i( v(x)_j, w(y)_j ) = \begin{cases}
    Q_i(1, 1) = 1 &\text{ if } M(x,j) = 1 \text{ and } j \neq y \\
    Q_i(1, 2) = 1 &\text{ if } M(x,j) = 1 \text{ and } j = y \\
    Q_i(2, 1) = 1 &\text{ if } M(x,j) = 0 \text{ and } j \neq y \,.
  \end{cases}
\]
This covers all the cases, since we never have $M(x,j) = 0$ and $j = y$, so $Q(v(x),w(y)) = 1$.
Finally, if $M(x,y) = 0$ then 
\[
  Q_i( v(x)_y, w(y)_y ) = Q_i( 2, 2 ) = 0 \,,
\]
so $Q(v(x), w(y)) = 0$. Therefore $M$ is a submatrix of $Q$.
\end{proof}

By the conjunction property, there is a constant $c$ such that $\D^\cQ(M) \leq
\D^\cQ(\bigwedge_{i=1}^N Q_i) \leq c$ for all $M \in \zo^{N \times N}$. Therefore, there is
a constant $C$ and a function $f\colon \zo^C \to \zo$ such that all matrices $M$ can be written as
\[
  M(x,y) \define f(Q_1(x,y), Q_2(x,y), \dotsc, Q_C(x,y))
\]
where each $Q_i \in \cQ$ (think of $f$ as the function which simulates the protocol for $\D^\cQ(M)$
using the answers to each query $Q_i$; see \eg \cite{HZ24} for the simple proof of this fact). Let
$f(\cQ)$ denote the set of all matrices which can be achieved in this way, which we have argued is
the set of all matrices. For the sake of contradiction, assume that $\cQ$ is not the set of all
matrices, so that the VC dimension $\mathsf{VC}(\cQ)$ is bounded. Then standard VC dimension
arguments (see \eg \cite{Mat13}) show that the VC dimension of $f(\cQ)$ is at most $O(
\mathsf{VC}(\cQ) \cdot C \log C)$.  Since $C$ is constant, the VC dimension of $f(\cQ)$ is therefore
also bounded, but $f(\cQ)$ contains all matrices, so this is a contradiction and $\cQ$ must contain
all matrices.
\end{proof}

\begin{remark}
If one is interested only in constant-cost oracles, we may replace the conjunction property
$\D^\cQ(\bigwedge_i Q_i) \leq c$ with the property $\R_{1/4}\left( \bigwedge_{i=1}^d Q_i \right) =
O(1)$, but the same proof rules out this generalization as well.
\end{remark}

\section*{Acknowledgments}

We thank Mika G\"o\"os and Eric Blais for discussions on the topic of this article, and Lianna
Hambardzumyan for kind remarks on an early draft. We thank Shachar Lovett for helping us remove a
$\log\log(\cdot)$ from the denominators in \cref{thm:no-boosting-for-hd1,thm:hd-lowerbound}.  Both authors are supported by Swiss
State Secretariat for Education, Research, and Innovation (SERI) under contract number MB22.00026.
Nathaniel Harms is also supported by  NSERC Postdoctoral Fellowship.

\bibliographystyle{alpha}
\bibliography{references.bib}
\end{document}